\documentclass[11pt,fleqn]{article}
\usepackage{times}
\usepackage{verbatim,amsmath,amssymb,amsthm,enumerate}
\usepackage[dvips]{graphicx}

\setlength{\oddsidemargin}{-0.05in}
\setlength{\evensidemargin}{-0.05in}
\addtolength{\topmargin}{-1in}
\setlength{\textwidth}{6.2in}
\setlength{\textheight}{9.5in}

\newcommand{\remove}[1]{}
\newcommand{\set}[1]{{\left\{ #1\right\}}}
\newcommand{\B}[0]{{\left\{0,1\right\}}}
\newcommand{\ol}{\overline}

\newcommand{\Tr}{\mathrm{Tr}}
\newcommand{\eps}{{\epsilon}}

\newcommand{\logeps}{\log \epsilon^{-1}}
\newcommand{\eqdef}{\stackrel{\rm def}{=}}

\newcommand{\poly}{{\rm poly}}
\newcommand{\Hom}{{\rm Hom}}
\newcommand{\opt}{{\sf opt}}

\newcommand{\ra} {\right \rangle}
\newcommand{\la} {\left  \langle}
\newcommand{\ket}[1]{\left|#1\right\rangle}
\newcommand{\bra}[1]{\left\langle #1\right|}

\newcommand{\ketbra}[2]{\ket{#1}\!\bra{#2}}
\newcommand{\tensor}{{\otimes}}

\newcommand{\vj}{{x_{j}}}
\newcommand{\wj}{{y_{j}}}

\newcommand{\norm}[1]{\left\| #1 \right\|}
\newcommand{\trnorm}[1]{\norm{#1}_{\rm tr}}
\newcommand{\dnorm}[1]{\norm{#1}_{\diamond}}

\newcommand{\cA}{\mathcal{A}}

\newcommand{\cV}{\mathcal{V}}
\newcommand{\cW}{\mathcal{W}}

\newcommand{\R}{\mathbb{R}}

\newcommand{\PP}{\text{P}}
\newcommand{\NP}{\text{NP}}
\newcommand{\IP}{\text{IP}}
\newcommand{\QIP}{\text{QIP}}
\newcommand{\PSPACE}{\text{PSPACE}}
\newcommand{\EXP}{\text{EXP}}

\newcommand{\KO}{K^{(1)}}
\newcommand{\SO}{S^{(1)}}

\newtheorem{theorem}{Theorem}[section]
\newtheorem{remark}{Remark}[section]
\newtheorem{definition}{Definition}[section]

\newtheorem{claim}{Claim}[section]

\begin{document}

\title{On the complexity of approximating the diamond norm}

\author{
Avraham Ben-Aroya\thanks{Department of Computer Science, Tel-Aviv
University, Tel-Aviv 69978, Israel. Supported by the Adams
Fellowship Program of the Israel Academy of Sciences and
Humanities, by the European Commission under the Integrated
Project QAP funded by the IST directorate as Contract Number
015848 and by USA Israel BSF grant 2004390. Email: abrhambe@post.tau.ac.il.}
\and Amnon Ta-Shma\thanks{Department of Computer Science, Tel-Aviv
University, Tel-Aviv 69978, Israel. Supported by the European Commission under the Integrated
Project QAP funded by the IST directorate as Contract Number
015848, by Israel Science Foundation grant 217/05 and by USA Israel BSF grant 2004390.
Email: amnon@tau.ac.il.
}}

\date{}
\maketitle \thispagestyle{empty}

\begin{abstract}
The \emph{diamond norm} is a norm defined over the space of
quantum transformations. This norm has a natural operational
interpretation: it measures how well one can distinguish between
two transformations by applying them to a state of arbitrarily
large dimension. This interpretation makes this norm useful in the
study of quantum interactive proof systems.

In this note we exhibit an efficient algorithm for computing this
norm using convex programming. Independently of us,
Watrous~\cite{W09} recently showed a different algorithm to
compute this norm. An immediate corollary of this algorithm is a slight
simplification of the argument of Kitaev and Watrous~\cite{KW00}
that $\QIP \subseteq \EXP$.
\end{abstract}

\section{Introduction}

How well can one distinguish two quantum transformations? Imagine
we have access to some unknown admissible super-operator $T$ and
we want to distinguish the case it is $T_1$ from the case it is
$T_2$ ($T_1$ and $T_2$ are known). Suppose that $T_1$ and $T_2$
take as input a state from a Hilbert space $\cV$. One possible
test to distinguish $T_1$ from $T_2$ is preparing an input state
$\rho \in D(\cV)$ (where $D(\cV)$ denotes the set of density
matrices over $\cV$), applying $T$ on $\rho$ and measuring the
result. This corresponds to:
$$
\sup \set{ \trnorm{T_1 \rho - T_2 \rho} ~:~ \rho \in D(\cV)}.
$$

However, somewhat surprisingly, it turns out that often one can
distinguish $T_1$ and $T_2$ better, by taking an auxiliary Hilbert
space $\cA$, preparing an \emph{entangled} input state $\rho \in
D(\cV \tensor \cA)$, applying $T$ on the $\cV$ register of $\rho$
and then measuring the global result. Therefore, we define:
\begin{eqnarray*}
{\rm dist}(\rho_1,\rho_2) & = & \sup \set{ \trnorm{(T_1 \tensor
I_{L(\cA)})\rho - (T_2 \tensor I_{L(\cA)})\rho} ~:~
\dim(\cA)<\infty,~ \rho \in D(\cV \tensor \cA)}.
\end{eqnarray*}

Kitaev~\cite{K97} proved that this phenomena is restricted by
dimension and the maximum is attained already with an auxiliary
Hilbert space $\cA$ of dimension $\dim(\cA) \le \dim(\cV)$. Define
the following functions on general (not necessarily admissible)
super-operators $T:L(\cV) \to L(\cW)$:
\begin{eqnarray*}
\trnorm{T} & = & \sup \set{ \trnorm{T(X)} : X \in L(\cV), \trnorm{X}=1} \mbox{ , and,}\\
\dnorm{T} & = & \trnorm{T \tensor I_{L(\cV)}}.
\end{eqnarray*}

\noindent Kitaev showed that both $\trnorm{\cdot}$ and
$\dnorm{\cdot}$ are norms. Furthermore, Rosgen and
Watrous~\cite[Lemma 2.4]{RW05} showed ${\rm
dist}(T_1,T_2)=\dnorm{T_1-T_2}$ for $T_1$ and $T_2$ that are
completely positive.

The diamond norm naturally appears when studying the class $\QIP$
of languages having a single-prover, multi-round interactive proof
protocol between an all-powerful prover and an efficient quantum
verifier. Kitaev and Watrous~\cite{KW00} showed that, without loss
of generality, perfect completeness can be achieved and three
rounds suffice (starting with the verifier). They also showed that
the value of a three round quantum interactive protocol can be
expressed as $\dnorm{T}$, for some super-operator $T$ that is
naturally defined given the protocol of the verifier. They used
this characterization, and the fact that $\dnorm{T_1 \tensor
T_2}=\dnorm{T_1} \cdot \dnorm{T_2}$ to show perfect parallel
amplification for $\QIP$ protocols. Finally, they showed that
$\QIP \subseteq \EXP$ by reducing the problem to an exponential
size semi-definite programming problem. Thus $\QIP$ is somewhere
between $\PSPACE$ and $\EXP$ (the containment $\PSPACE=\IP
\subseteq \QIP$ is immediate). Very recently, Jain et. al.
\cite{JJUW09} showed that $\QIP=\PSPACE$, by showing a space
efficient solution to a semi-definite program that captures the
complexity of the class $\QIP$.

Another connection between $\QIP$ and the diamond norm was given
by Rosgen and Watrous~\cite{RW05}. They defined the promise
problem ${\sf QCD}_{a,b}$ (quantum circuit distinguishability)
whose input is two admissible super-operators $T_1$ and $T_2$, the
``yes" instances are pairs $(T_1,T_2)$ for which $\dnorm{T_1-T_2}
\ge a$ and the ``no" instances are the pairs for which
$\dnorm{T_1-T_2} \le b$. Rosgen and Watrous~\cite{RW05} proved
that for every $a<b$ the problem ${\sf QCD}_{a,b}$ is
$\QIP$-complete (see also~\cite{R08}).

The work of Kitaev and Watrous, as well as the work of Rosgen and
Watrous do not imply that approximating the diamond norm itself
can be done in $\PP$. In this note we prove that the diamond norm
can be computed by solving a convex optimization problem, and
therefore it is in $\PP$. More precisely, if we are given as input
a description of $T:L(\cV) \to L(\cV)$, e.g., written as a matrix of
dimensions $N^2 \times N^2$ (where $N=\dim(\cV)$), and we are
given $\eps>0$, then we can approximate $\dnorm{T}$ to within
$\eps$ additive accuracy in time $\poly(N, \log \eps^{-1})$.
Independently of us, Watrous~\cite{W09} recently showed a similar
result using a semi-definite program.

This claim can also be used to simplify the (somewhat more
complicated) proof given in~\cite{KW00} that $\QIP \subseteq
\EXP$. To see this, notice that Kitaev and Watrous already proved
that the value of a three round quantum interactive proof system
can be captured as the
diamond norm of a natural super-operator $T$. Thus, given such a
proof system, all we need to do is to explicitly write down the
description of $T$ (which can be done in $\PSPACE$ and therefore
in time exponential in $\poly(n)$, where $n$ is the input length
of the \QIP{} protocol) and then approximate its diamond norm, in
time polynomial in $\exp(\poly(n))$.

Our proof is surprisingly simple. We use an equivalent formulation
of the diamond norm, proved by Kitaev, and we notice that it gives
a convex program using the joint concavity of the fidelity
function. We use a representation for density matrices suggested
by Liu~\cite{Liu06} in a different context for a similar purpose.

\section{Preliminaries}

Let $\cV,\cW$ be two Hilbert spaces. $\Hom(\cV,\cW)$ denotes the
set of all linear transformations from $\cV$ to $\cW$ and is a
vector space of dimension $\dim(\cV) \cdot \dim(\cW)$ equipped
with the Hilbert-Schmidt inner product $\la T_1,T_2 \ra =
\Tr(T_1^\dagger T_2)$. $L(\cV)$ denotes $\Hom(\cV,\cV)$. Let
$\set{\ket{i}}$ denote the standard basis for $\cV$. The set
$$\set{\ket{i}\bra{j} ~:~ 1 \le i,j \le \dim(\cV)}$$ is an
orthonormal basis of $L(\cV)$. When $\dim(\cV)=2^n$, tensor
products of Pauli operators form another natural basis for
$L(\cV)$. The Pauli operators are

$$\sigma_{0}=\left(
\begin{array}{cc}1 & 0 \\ 0 & 1
\end{array}\right)~~ \sigma_{1}=\left( \begin{array}{cc}0 & 1 \\ 1
& 0 \end{array}\right)~~ \sigma_{2}=\left( \begin{array}{cc}0 & -i
\\ i & 0 \end{array}\right)~~ \sigma_{3}=\left(
\begin{array}{cc}1 & 0 \\ 0 & -1 \end{array}\right).$$
The set $\set{\sigma_{i_1} \tensor \ldots \tensor \sigma_{i_n} ~:~
0 \le i_1,\ldots,i_n \le 3}$ is an orthogonal basis of $L(\cV)$,
and all basis elements have eigenvalues $\pm 1$.

For a linear operator $A \in \Hom(\cV,\cW)$, the spectral norm of $A$ is
$$\norm{A} \eqdef \sup_{x: \norm{x}=1} x^\dagger A^\dagger A x$$
and is equal to the largest singular value of $A$. For any Pauli
operator $P$, $\norm{P} =1$. The $\ell_2$ norm of $A$ is $\norm{A}_2=\Tr(A^\dagger A)$ and is equal to the $\ell_2$ norm of the singular values of $A$.

A pure state is a unit vector in some Hilbert space. A general
quantum system is in a {\em mixed state\/}--a probability
distribution over pure states. Let $\{p_i, \ket{\phi_i}\}$ denote
the mixed state in which the pure state~$\ket{\phi_i}$ occurs with
probability~$p_i$. The behavior of the mixed-state
$\set{p_i,\ket{\phi_i}}$ is completely characterized by its {\em
density matrix\/}~$\rho = \sum_i p_i \ketbra{\phi_i}{\phi_i}$, in
the sense that two mixed states with the same density matrix
behave the same under any physical operation. Notice that a
density matrix over a Hilbert space $\cV$ belongs to $L(\cV)$.
Density matrices are positive semi-definite operators and have
trace $1$. We denote the set of density matrices over $\cV$ by
$D(\cV)$.

\vspace*{0.3cm} \noindent {\bf{Trace norm and fidelity.}} The {\em
trace norm\/} of a matrix~$A$ is defined by
$$\trnorm{A} =\Tr(|A|)= \Tr\left({\sqrt{A^\dagger A}}\right),$$
which is the sum of the magnitudes of the singular values of~$A$.
One way to measure the distance between two density matrices
$\rho_1$ and $\rho_2$ is by their trace distance
$\trnorm{\rho_1-\rho_2}$. Another useful alternative to the trace
metric as a measure of closeness of density matrices is the {\em
fidelity}. For two positive semi-definite operators $\rho_1,
\rho_2$ on the same finite dimensional space~$\cV$ (not
necessarily having trace $1$) we define
$$F(\rho_1,\rho_2) \;=\; \left[\Tr\left(
                             \sqrt{{\rho_1}^{1/2}\,\rho_2\,{\rho_1}^{1/2}}
                             \right) \right]^2
                   \;\;=\;\; \trnorm{ \sqrt{\rho_1}\sqrt{\rho_2}}^2.
$$

We remark that some authors define
$\sqrt{F}=\trnorm{\sqrt{\rho_1}\sqrt{\rho_2}}$ as the fidelity.
Our definition is consistent with~\cite{KSV02}. $\sqrt{F}$ is
jointly concave, i.e., for every set
$\set{(\rho_i,\xi_i)}_{i=1}^k$ of pairs of density matrices and
every $0 \le \lambda_1,\ldots,\lambda_k \le 1$ such that
$\sum_{i=1}^k \lambda_i=1$,
$$
\sqrt{F}(\sum_{i=1}^k \lambda_i \rho_i,\sum_{i=1}^k \lambda_i
\xi_i) \;\ge\; \sum_{i=1}^k \lambda_i \sqrt{F}(\rho_i,\xi_i).
$$
A proof of this fact appears, e.g., in~\cite[Exercise
9.19]{NC00}.\footnote{Note that in~\cite{NC00} the fidelity
function is defined to be $\sqrt{F}$. In particular, the joint
concavity of the fidelity function proved in~\cite[Exercise
9.19]{NC00} proves joint concavity of $\sqrt{F}$ according to our
notation.} We remark that $F$ is not jointly concave
(see~\cite[Section 2]{MPHUZ08} for a short survey on what is known
about the fidelity function).

\vspace*{0.3cm} \noindent {\bf{The diamond norm.}} Kitaev gave a
different equivalent characterization of the diamond norm as
follows. Any $T:L(\cV) \to L(\cV)$ can be written in a
\emph{Stinespring representation}, i.e., as
$$T(X) = \Tr_\cA (B X C^\dagger),$$
where $B,C \in \Hom(\cV, \cV \tensor \cA)$ and $\dim(\cA) \le
(\dim(\cV))^2$ (see, e.g.,~\cite[page 110]{KSV02} or~\cite[Lecture
4]{W04}). Define two completely positive super-operators $T_1,
T_2: L(\cV) \to L(\cA)$:
\begin{eqnarray}
\label{eq:T1}
T_1(X) &=& \Tr_\cV (BXB^\dagger) ,\\
\label{eq:T2}
T_2(X) &=& \Tr_\cV (CXC^\dagger) .
\end{eqnarray}
Then, the diamond norm of $T$ can be written as
$$\dnorm{T} = \max \set{ \sqrt{F}(T_1(\rho), T_2(\xi)) ~:~
\rho, \xi \in D(\cV)}.$$ The proof of this characterization can be
found in~\cite[Problem 11.10]{KSV02} or in Watrous' lecture
notes~\cite[Lecture 22, Theorem 22.2]{W04} (and notice that
Watrous defines the fidelity function to be $\sqrt{F}$). Further
information on the trace norm and the diamond norm of
super-operators can be found in~\cite{KSV02}.

\vspace*{0.3cm} \noindent {\bf{Convex programming.}} Maximizing a
convex function over a convex domain is, in general, \NP{}-hard
(see~\cite{FV95} for a survey). In sharp contrast to this,
\emph{convex programming}, which is the problem of
\emph{minimizing} a convex function over a convex domain, is in
\PP{}. One of the reasons that convex programming is easier to
solve is due to the fact that in a convex program any \emph{local}
optimum equals the \emph{global} optimum. Special cases of convex
programming are semi-definite programming and linear programming.
Convex programming can be solved in polynomial time using the
ellipsoid algorithm~\cite{K79} or interior-point methods. Often,
these algorithms assume a \emph{separation oracle}, i.e., an
efficient procedure that given a point tells whether it belongs to
the convex set, and if not, gives a half-space that separates the
point from the convex set. However, the problem can also be solved
using a \emph{membership oracle}~\cite{YN76,GLS88} (a randomized
algorithm is given in~\cite{BV04}).

For $a \in \R^n$ and $R>0$ we define $B_n(a,R)=\set{x \in \R^n ~:~
\norm{x-a}_2 \le R}$. For a set $K \subseteq \R^n$ we define
\begin{eqnarray*}
K_{-\eps} & = & \set{x \in \R^n ~:~ B_n(x,\eps) \subseteq K} \\
K_{+\eps} & = & \set{x \in \R^n ~:~ \exists {y \in K} \mbox{
\;such that\; } x \in B_n(y,\eps)}
\end{eqnarray*}
That is, $K_{-\eps}$ is the set of points $\eps$-deep in $K$ and
$\R^n \setminus K_{+\eps}$ is the set of points $\eps$-deep in the
complement of $K$.

\begin{definition}
A function $O_K:\R^n \times \R^+ \to \B$ is a \emph{membership
oracle for $K \subseteq \R^n$} if for every $\eps>0$,
$O_K(x,\eps)=1$ for any $x \in K_{-\eps}$ and $O_K(x,\eps)=0$ for
any $x \not \in K_{+\eps}$. $O_K$ is \emph{efficient}, if it runs
in time polynomial in its input length.
\end{definition}
\begin{definition}
A function $O_f:K \times \R^+ \to \R$ is an \emph{evaluation
oracle computing $f$ over $K$}, if for every $x \in K$ and every
$\eps>0$, $|f(x)-O_f(x,\eps)| \le \eps$. $O_f$ is
\emph{efficient}, if it runs in time polynomial in its input
length.
\end{definition}

\begin{theorem}[\cite{YN76},{\cite[Theorem 4.3.13]{GLS88}}]
\label{thm:convex-programming} There exists an algorithm that
solves the following problem:
\begin{description}
\item [Input]: \vspace{-22pt}
\begin{enumerate}

\item
A convex body $K$ given by an efficient membership oracle.

\item An integer $n$, rational numbers $R,r>0$ and
a vector $a_0 \in \R^n$ such that
$$B_n(a_0,r) \subseteq K \subseteq B_n(\ol{0},R) \subseteq \R^n.$$

\item
A rational number $\eps>0$.

\item
\label{it:g}
A convex function $g: K_{+\eps} \to \R$ given by an efficient evaluation oracle.
\end{enumerate}

\item [Output]:
A value $x \in K_{+\eps}$ such that $|g(x)-\widetilde{\opt}| \le \eps$, where $\widetilde{\opt}=\min_{x \in
K_{-\eps}} g(x)$.
\end{description}

The algorithm runs in time $\poly(n,\logeps,\log (R/r))$.
\end{theorem}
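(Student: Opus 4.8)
The plan is to follow the standard two-level strategy underlying the Yudin--Nemirovski and Grötschel--Lovász--Schrijver framework: reduce convex \emph{minimization} to a sequence of convex \emph{feasibility} tests, and solve each feasibility test with the ellipsoid method. First I would reduce optimization to feasibility by bisection on the objective value. For a threshold $t$, consider the convex set $K^{(t)} = K \cap \set{x : g(x) \le t}$; since $g$ is convex this is convex, and locating a point in it (up to the weak tolerances) certifies $\widetilde{\opt} \le t$, while failure certifies $\widetilde{\opt} \gtrsim t$. Because $g$ is bounded on $K_{+\eps}$ and $K$ is sandwiched between balls of radii $r$ and $R$, both the relevant range of thresholds and the Lipschitz constant of $g$ on the interior are controlled, so $O(\logeps + \log(R/r))$ bisection steps suffice to pin $\widetilde{\opt}$ down to additive $\eps$ and to output the corresponding near-optimal point in $K_{+\eps}$.

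Second, I would solve each feasibility test with the ellipsoid method. Starting from $B_n(\ol{0},R)$, which contains $K^{(t)}$, I maintain an ellipsoid guaranteed to contain the feasible region whenever the latter is ``fat'' enough, and at each step I query its center $c$. If $c$ lies in the weak feasible set I am done; otherwise I need a hyperplane separating $c$ from $K^{(t)}$. The ellipsoid volume shrinks by a factor $e^{-1/(2(n+1))}$ per iteration, so after $\poly(n,\log(R/r),\logeps)$ iterations either a feasible center appears or the ellipsoid has shrunk below the volume that any $\eps$-deep feasible region must occupy, certifying weak infeasibility.

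The crux, and the step I expect to be the main obstacle, is manufacturing the required weak \emph{separation} oracle out of the given \emph{membership} oracle $O_K$ for $K$ and \emph{evaluation} oracle $O_f$ for $g$, since the ellipsoid method is driven by separation, not membership. For the analytic constraint $g(x)\le t$ this is comparatively easy: at a query point $y$ with $g(y)>t$ I can estimate a subgradient of $g$ by finite differences, probing $O_f$ at $y$ and at nearby perturbations, and use the resulting approximate subgradient as a separating direction, convexity of $g$ ensuring the induced halfspace excludes $\set{x : g(x)\le t}$ up to a controlled error. For membership in $K$ the construction is genuinely delicate: given $y \not\in K_{+\eps}$, I would exploit the inscribed ball $B_n(a_0,r)$ by moving along the segment from $a_0$ toward $y$ and binary-searching with $O_K$ for the boundary crossing, then reading off an approximate outward normal there; the inscribed ball bounds how thin $K$ can be near its boundary and thereby converts coarse membership information into a hyperplane whose error scales gracefully with $r$, $R$, and the membership tolerance.

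Finally, I would track the tolerances carefully so that the slack in the finite-difference subgradients, the binary-search separation, and the weak membership guarantees all compose into the single additive error $\eps$ claimed, with the feasible/infeasible thresholds of the stopping rule matched to the $K_{-\eps}$ versus $K_{+\eps}$ distinction in the statement. The running time is then the product of the $O(\logeps+\log(R/r))$ bisection steps, the $\poly(n,\log(R/r),\logeps)$ ellipsoid iterations, and the $\poly(n,\logeps,\log(R/r))$ oracle calls spent per separation, all of which remain within the stated $\poly(n,\logeps,\log(R/r))$ bound.
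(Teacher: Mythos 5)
The paper does not actually reprove this theorem: it invokes it as a known result of Yudin--Nemirovski~\cite{YN76} and Gr\"otschel--Lov\'asz--Schrijver~\cite{GLS88}, and the only original content is the accompanying remark, which observes that the GLS proof is a Turing reduction that queries membership in the lifted convex set $\set{(x,b) ~:~ x \in K,\ g(x) \le b}$, so that $g$ never needs to be evaluated outside $K_{+\eps}$; this is why the weakened hypothesis on the domain of $g$ is harmless. Your proposal instead tries to prove the theorem from scratch. Its outer structure (bisection on the objective value, ellipsoid iterations for each feasibility test) is standard and unobjectionable, but the proof as a whole has a genuine gap at exactly the step you flag as the crux.

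The gap is the conversion of the weak \emph{membership} oracle for $K$ into a weak \emph{separation} oracle. Binary search along the segment from $a_0$ to the query point $y$ does locate an approximate boundary crossing $z$, but a membership oracle gives you no way to ``read off an approximate outward normal'' at $z$. The hyperplane through $z$ orthogonal to the ray from $a_0$ to $y$ is not separating in general: take $K$ a long thin box with $a_0$ at its center and $y$ just outside a long side near one end; that hyperplane cuts deeply through $K$, so deep points of $K$ lie on the same side as $y$. Estimating the normal by local probing around $z$ also fails robustly: with oracle tolerance $\eta$ and probe scale $\delta$ the directional information carries error on the order of $\eta/\delta$, the crossing point may be a corner where no local estimate is stable, and weak separation is a \emph{global} requirement (the hyperplane must be valid against all of $K_{-\delta}$). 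This membership-to-separation conversion is precisely the hard core of the cited theorem; in \cite{GLS88} it is Theorem 4.3.2, and its proof goes through polarity (membership for $K$ amounts to validity for the polar body) combined with the shallow-cut ellipsoid method, not through boundary-normal estimation. A related unaddressed issue arises for the level sets of $g$: finite differences of a merely convex, possibly non-differentiable function, computed from $\eps$-accurate evaluations, need not approximate any subgradient. So your outline in effect reduces the theorem to essentially the same nontrivial theorem and then asserts it. Either carry out the polar/shallow-cut argument (or a careful gauge-function subgradient-estimation argument), or do what the paper does: cite the result and verify only that the reduction queries $g$ on points of $K_{+\eps}$, which is the one place where the paper's statement differs from the textbook one.
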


\begin{remark}
The theorem is a slight variation of the one appearing in \cite{GLS88}. There $g$ is required to be defined and convex over the whole of $\R^n$, whereas we only require that it is defined over $K_{+\eps}$.

To see why our variation is correct, notice that the proof given in \cite{GLS88} works by a Turing reduction that queries membership in the convex set $\set{(x,b)~|~x\in K, g(x) \le b}$. If $x$ is $\eps$-far from $K$ it is also $\eps$-far from $\set{(x,b)~|~x\in K, g(x) \le b}$ and we can safely reject. Hence, we only need to query $g$ on inputs that are in $K_{+\eps}$.
\end{remark}

\section{Approximating the diamond norm in \PP}

\subsection{Representing density matrices}

We follow~\cite{Liu06} in the way we represent density matrices as
vectors. This is due to that fact that we need the set of vectors
representing the density matrices to contain and to be contained in
balls of appropriate radii around the origin.

We represent $\rho \in D(\cV)$ by its Pauli-basis coefficients,
but excluding the identity coefficient which is always $1$. Thus,
we represent $\rho \in D(\cV)$ as a vector $v(\rho) \in
\R^{N^2-1}$, where the $i$th coordinate of this vector is given by
$v_i(\rho) = \Tr(P_{i+1}\rho)$, where $P_i$ is the $i$th Pauli
operator and $P_1 = I$. (Notice that $\Tr(P\rho) \in \R$ for
Hermitian $P$ and positive semi-definite $\rho$.) We let
$$\KO=\set{ v(\rho) ~:~  \rho \in D(\cV)}.$$
The converse transformation $\Phi:\KO \to D(\cV)$ is defined by
$$\Phi(x)=\frac{1}{N} \left( I+\sum_{i=1}^{N^2-1} x_i P_{i+1} \right) \in D(\cV).$$

Notice that for any $\rho \in D(\cV)$, $\Phi(v(\rho))=\rho$ and
similarly, for any $x \in \KO$, $v(\Phi(x))=x$. Also for every $x
\in \R^{N^2-1}$ (not necessarily in $\KO$) we have that
$\Tr(\Phi(x))=1$, and for every $x,y \in \R^{N^2-1}$,
$\norm{\Phi(x) - \Phi(y)}_2 = \frac{1}{\sqrt{N}}\norm{x - y}_2$
where the first norm is over $L(\cV)$ and the second over
$\R^{N^2-1}$.

%

The convex set that we optimize over is $K=\KO \times \KO$. We
claim:

\begin{claim}
$K$ is convex and $B_{2N^2-2}(\ol{0},{1 \over 2\sqrt{N}})
\subseteq K \subseteq B_{2N^2-2}(\ol{0},2N)$.
\end{claim}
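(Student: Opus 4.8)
The plan is to treat the three assertions—convexity, the inner ball, and the outer ball—separately, reducing each to the single factor $\KO$ since $K=\KO\times\KO$ is a Cartesian product. For convexity I would note that the map $\rho\mapsto v(\rho)$ is linear on $L(\cV)$ (each coordinate $v_i(\rho)=\Tr(P_{i+1}\rho)$ is linear in $\rho$) and that $D(\cV)$ is convex; hence $\KO=v(D(\cV))$ is the image of a convex set under a linear map, therefore convex, and a product of convex sets is convex.

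For both ball inclusions the one computation I need is the Frobenius norm of $M\eqdef\sum_{i=1}^{N^2-1}x_iP_{i+1}$. Using that the Paulis are orthogonal with $\Tr(P_iP_j)=N\delta_{ij}$ and that the non-identity Paulis are traceless, I would get $\norm{M}_2^2=\Tr(M^2)=N\sum_i x_i^2=N\norm{x}_2^2$, i.e.\ $\norm{M}_2=\sqrt{N}\,\norm{x}_2$. This is exactly the relation $\norm{\Phi(x)-\Phi(y)}_2=\tfrac{1}{\sqrt{N}}\norm{x-y}_2$ recorded in the preliminaries, applied with $y=\ol{0}$ (so $\Phi(y)=I/N$ and $M=N(\Phi(x)-\Phi(\ol 0))$).

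For the inner inclusion I would first observe that, since $\Phi(x)=\tfrac{1}{N}(I+M)$ is Hermitian with trace $1$ for every $x$ and satisfies $v(\Phi(x))=x$, membership $x\in\KO$ is equivalent to $\Phi(x)\succeq 0$. Now if $\norm{x}_2\le\frac{1}{2\sqrt{N}}$ then $\norm{M}\le\norm{M}_2=\sqrt{N}\,\norm{x}_2\le\frac{1}{2}$ (the spectral norm is at most the Frobenius norm), so every eigenvalue of the Hermitian $M$ lies in $[-\tfrac{1}{2},\tfrac{1}{2}]$, giving $I+M\succeq\tfrac{1}{2}I\succ 0$ and hence $\Phi(x)\succeq 0$. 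Thus $B_{N^2-1}(\ol{0},\frac{1}{2\sqrt{N}})\subseteq\KO$; since a point of norm at most $\frac{1}{2\sqrt{N}}$ in $\R^{2N^2-2}$ has both halves of norm at most $\frac{1}{2\sqrt{N}}$, this yields the inner ball for $K$.

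For the outer inclusion I would bound $\norm{v(\rho)}_2$ for $\rho\in D(\cV)$. Expanding $\rho$ in the orthonormal Pauli basis $\{P_i/\sqrt{N}\}$ and applying Parseval gives $\sum_{i=1}^{N^2}\Tr(P_i\rho)^2=N\Tr(\rho^2)$, and since $\Tr(P_1\rho)=\Tr(\rho)=1$ this yields $\norm{v(\rho)}_2^2=N\Tr(\rho^2)-1\le N-1$, using $\Tr(\rho^2)\le(\Tr\rho)^2=1$. Hence every point of $K$ has norm at most $\sqrt{2(N-1)}\le 2N$, establishing $K\subseteq B_{2N^2-2}(\ol{0},2N)$. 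The only mildly delicate point is the equivalence $x\in\KO$ iff $\Phi(x)\succeq 0$, which rests on $\Phi$ and $v$ being mutually inverse together with the automatic trace-one and Hermiticity of $\Phi(x)$; everything else is a direct norm estimate, so I do not anticipate a genuine obstacle.
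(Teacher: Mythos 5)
Your proof is correct, and although it follows the same skeleton as the paper's (reduce everything to the single factor $\KO$, prove the inner inclusion by checking positive semi-definiteness of $\Phi(x)$, prove the outer inclusion by bounding $\norm{v(\rho)}_2$), both key estimates are carried out by genuinely different means, and in each case your version is sharper. For the inner ball the paper bounds $u^\dagger\Phi(x)u \ge \frac{1}{N}\left(1-\sum_i |x_i|\,\norm{P_{i+1}}\right)$ for each unit vector $u$ and then asserts this is at least $\frac{1}{N}\left(1-\sqrt{N}\norm{x}_2\right)$; read literally, that intermediate step needs $\sum_i|x_i|\le\sqrt{N}\norm{x}_2$, while Cauchy--Schwarz over the $N^2-1$ coordinates only gives $\sum_i|x_i|\le\sqrt{N^2-1}\,\norm{x}_2$, which is too lossy to conclude positivity for large $N$. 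Your route --- bound the spectral norm of $M=\sum_i x_iP_{i+1}$ by its Frobenius norm, which Pauli orthogonality pins down exactly as $\sqrt{N}\norm{x}_2$ --- delivers precisely the inequality the paper needs, so in effect you supply the justification that the published chain of inequalities is missing. For the outer ball the paper bounds each coordinate separately, $|\Tr(\rho P_{i+1})|\le 1$, giving radius $N$ per factor; you instead use Parseval in the orthonormal basis $\set{P_i/\sqrt{N}}$ together with $\Tr(\rho^2)\le 1$ to get the exact identity $\norm{v(\rho)}_2^2=N\Tr(\rho^2)-1\le N-1$. The paper's coordinatewise bound is more elementary, but yours is quantitatively tighter (radius $\sqrt{2(N-1)}$ rather than $\sqrt{2}\,N$ for $K$); either suffices for the stated radius $2N$. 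Your reduction of membership in $\KO$ to the condition $\Phi(x)\succeq 0$, resting on $v$ and $\Phi$ being mutually inverse and on $\Phi(x)$ being automatically Hermitian of trace one, is also sound and matches what the paper uses implicitly.
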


\begin{proof}
$\KO$ is convex since the set of density matrices is convex. Hence
$K$ is also convex. Next we show $B_{N^2-1}(\ol{0},{1 \over
2\sqrt{N}}) \subseteq \KO$ which implies $B_{2N^2-2}(\ol{0},{1
\over 2\sqrt{N}}) \subseteq K$. Indeed, let $x \in \R^{N^2-1}$ be
such that $\norm{x}_2 \le {1 \over 2\sqrt{N}}$ and let $$\rho
=\Phi(x)={1 \over N} \left(I+\sum_{i=2}^{N^2} x_i P_i\right).$$
Clearly $\rho$ is Hermitian and has trace $1$. We are left to
verify that $\rho$ is positive semi-definite. Fix a unit vector $u
\in \R^{N}$. Then,
\begin{eqnarray*}
u^\dagger \rho u &=& {1 \over N} \left( u^\dagger I u +\sum_{i=1}^{N^2-1} x_i u^\dagger P_{i+1} u \right) ~\ge~ {1 \over N} \left( 1 - \Big|\sum_{i=1}^{N^2-1} x_i u^\dagger P_{i+1} u\Big| \right) \\
& \ge & {1 \over N} \left( 1-\sum_{i=1}^{N^2-1} |x_i| \cdot \norm{P_{i+1}} \right)
~\ge~ {1 \over N} \left( 1-\sqrt{N} \norm{x}_2 \right) > 0.
\end{eqnarray*}

In order to show $K \subseteq B_{2N^2-2}(\ol{0},2N)$ it is enough
to show $\KO \subseteq B_{N^2-1}(\ol{0},N)$. Let $x \in \KO$. Then
$\rho =\Phi(x) \in D(\cV)$ and for any $1 \le i \le N^2-1$,
$$v_i(\rho)=|\Tr(\rho P_{i+1})| \le \Tr(|\rho P_{i+1}|) \le
\norm{P_{i+1}} \Tr(\rho) \le 1,$$ and so
$\norm{x}_2=\norm{v(\rho)}_2 \le N$.
\end{proof}

\begin{claim}
\label{cl:membership}
There exists an efficient membership oracle for $K$.
\end{claim}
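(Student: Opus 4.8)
The plan is to reduce membership in $K = \KO \times \KO$ to membership in a single copy $\KO$, and then to reduce membership in $\KO$ to an approximate positive-semidefiniteness test on $\Phi(x)$. For the product reduction, observe that if $(x,y)$ is $\eps$-deep in $K$ then, perturbing only the first block, $B_{N^2-1}(x,\eps)\subseteq\KO$, so $x\in\KO_{-\eps}$ and likewise $y\in\KO_{-\eps}$; while if $(x,y)\notin K_{+\eps}$ then, since the product metric is separable, $d(x,\KO)^2+d(y,\KO)^2>\eps^2$, so at least one of $x,y$ is more than $\eps/\sqrt 2$ far from $\KO$. Hence it suffices to build an efficient membership oracle $O_{\KO}$ for $\KO$ and output $O_{\KO}(x,\eps/\sqrt 2)\wedge O_{\KO}(y,\eps/\sqrt 2)$; I would then verify that this meets the two defining conditions of a membership oracle for $K$ (using $\KO_{-\eps}\subseteq\KO_{-\eps/\sqrt 2}$ for acceptance and the distance bound above for rejection).

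For the single copy, the starting point is that for any $x\in\R^{N^2-1}$ the matrix $\Phi(x)$ is automatically Hermitian with trace $1$, so $x\in\KO$ if and only if $\Phi(x)\succeq 0$, i.e. $\lambda_{\min}(\Phi(x))\ge 0$. The oracle will compute an approximation to $\lambda_{\min}(\Phi(x))$ and accept iff it is nonnegative. To show correctness I need two quantitative bounds translating the Euclidean geometry on coefficient vectors into a spectral margin, both exploiting the stated identity $\norm{\Phi(x)-\Phi(y)}_2=\frac{1}{\sqrt N}\norm{x-y}_2$.

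The first bound (depth $\Rightarrow$ spectral margin): if $x\in\KO_{-\eps}$ then $\lambda_{\min}(\Phi(x))\ge\Omega(\eps/\sqrt N)$. I would prove this by taking a minimal eigenvector $\ket{v}$ of $\Phi(x)$ and perturbing $x$ in the direction that subtracts a multiple of the traceless Hermitian matrix $\ketbra{v}{v}-I/N$; scaling the perturbation to Euclidean length $\eps$ (HS-length $\eps/\sqrt N$) and using that $\Phi(x+\delta)$ must stay PSD gives the lower bound on $\lambda_{\min}$. The second bound (distance $\Rightarrow$ negative eigenvalue): if $x\notin\KO_{+\eps}$ then $\lambda_{\min}(\Phi(x))<-\Omega(\eps/N^{3/2})$. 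I would prove the contrapositive: if $\lambda_{\min}(\Phi(x))\ge-\beta$ for small $\beta$, then zeroing the (at most $N$) negative eigenvalues of $\Phi(x)$ and renormalizing the trace yields a genuine density matrix within HS-distance $O(N\beta)$ of $\Phi(x)$, hence within Euclidean distance $O(N^{3/2}\beta)$ of $x$; choosing $\beta$ so this is at most $\eps$ forces $x\in\KO_{+\eps}$.

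These two bounds leave a gap of width $\Omega(\eps/N^{3/2})$ between the least value of $\lambda_{\min}$ achievable on $\KO_{-\eps}$ and the greatest achievable off $\KO_{+\eps}$, so thresholding $\lambda_{\min}(\Phi(x))$ at $0$ correctly accepts $\KO_{-\eps}$ and rejects the complement of $\KO_{+\eps}$. It remains to note that $\Phi(x)$ is an explicit $N\times N$ Hermitian matrix computable from $x$ in $\poly(N)$ time, and that $\lambda_{\min}(\Phi(x))$ can be approximated to additive accuracy $\Omega(\eps/N^{3/2})$ — comfortably below the gap — in time $\poly(N,\logeps)$ by standard numerical linear algebra (e.g. Householder tridiagonalization with Sturm-sequence bisection, or repeated positive-definiteness tests on shifts of $\Phi(x)$); accepting iff the computed value is nonnegative gives $O_{\KO}$. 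I expect the main obstacle to be the second bound: controlling the distance from a barely-non-PSD Hermitian matrix to the set of density matrices tightly enough to guarantee the $\poly(1/N)$ spectral gap, since the projection-and-renormalize argument must be carried out carefully to keep the polynomial dependence on $N$ under control.
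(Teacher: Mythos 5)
Your proposal is correct and follows essentially the same route as the paper: reduce to a membership oracle for $\KO$, test positive semi-definiteness of $\Phi(x)$ by approximating its smallest eigenvalue and thresholding at zero, and justify this with the same two quantitative margins --- $\lambda_{\min}\ge\Omega(\eps/\sqrt N)$ for $x\in \KO_{-\eps}$ (via a rank-one eigenvector perturbation that must stay PSD, where the paper instead derives a contradiction from an explicitly constructed non-PSD matrix $(1+\alpha)\rho-\alpha\ketbra{v_N}{v_N}$) and $\lambda_{\min}\le-\Omega(\eps/N^{3/2})$ for $x\notin \KO_{+\eps}$ (via zeroing negative eigenvalues and renormalizing, exactly as in the paper). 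The only addition is your explicit treatment of the product reduction $K=\KO\times\KO$, which the paper dismisses as clear; this is a fine detail to include but not a different approach.
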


\begin{proof}
Clearly it is enough to give an efficient membership oracle for
$\KO$. Given an input $x \in \R^{N^2 - 1}$ and an $\eps>0$ we
construct the Hermitian matrix $\rho=\Phi(x)$ and approximate its
eigenvalues with accuracy $\zeta=\frac{\eps}{10N^{3/2}}$ in the
$\ell_\infty$ norm. We then look at its smallest eigenvalue and we
return $1$ if it is positive and $0$ otherwise.

Given $x$, let $\sum_i \lambda_i \ketbra{v_i}{v_i}$ with
$\lambda_1 \ge \ldots \ge \lambda_N$ be the spectral decomposition
of $\rho=\Phi(x)$. The correctness of the membership oracle
follows from the following two claims:
\begin{itemize}
\item
If $x \in \KO_{-\eps}$ then $\lambda_N \ge {\eps \over 10 \sqrt{N}} >
\zeta$.
\item
If $x \not \in \KO_{+\eps}$ then $\lambda_N \le
-\frac{\eps}{2N^{3/2}} < -\zeta$.
\end{itemize}

For the first item, assume $x \in \KO_{-\eps}$ but $\lambda_N \le
{\eps \over 10 \sqrt{N}}$. Define
$\sigma=(1+\alpha)\rho-\alpha\ketbra{v_N}{v_N}$ for
$\alpha=\frac{2\lambda_N}{1-\lambda_N}$. Then $v(\sigma) \not \in
\KO$ because $\ket{v_N}$ is an eigenvector of $\sigma$ with
negative eigenvalue, but
$$\norm{x-v(\sigma)}_2=\sqrt{N}\norm{\rho-\sigma}_2 \le
\sqrt{N}\trnorm{\rho-\sigma} \le 2 \sqrt{N}\alpha
 \le 10 \sqrt{N}\lambda_N \le \eps,$$ and so $x \not \in \KO_{-\eps}$. A
contradiction.

For the second item, assume $x \not \in \KO_{+\eps}$ and $0 >
\lambda_N \ge -\frac{\eps}{2N^{3/2}}$. Define $\sigma={1 \over
1+\Delta}\sum_{i:\lambda_i>0} \lambda_{i}\ketbra{v_i}{v_i}$ for
$\Delta=-\sum_{i:\lambda_i<0} \lambda_i$. Clearly, $v(\sigma) \in
\KO$. Also, $$\norm{x-v(\sigma)}_2=\sqrt{N}\norm{\rho-\sigma}_2
\le \sqrt{N}\trnorm{\rho-\sigma} =2 \sqrt{N}\Delta \le 2 \sqrt{N}
N |\lambda_N| \le \eps.$$ Thus, $x \in
\KO_{+\eps}$. A contradiction.
\end{proof}

\subsection{The target function}

Let $\cV$ be a Hilbert space of dimension $N$. Let $T:L(\cV) \to
L(\cV)$ be a linear operator given in a Stinespring
representation, i.e., as a pair of operators $(B,C)$ such that
$$T(X) = \Tr_\cA (B X C^\dagger),$$ and let $\eps>0$. We assume
that $N$ is a power of $2$. From $B$ and $C$ we can compute $T_1$
and $T_2$ as in Equations~(\ref{eq:T1}) and~(\ref{eq:T2}). We
define a target function $g:K \to [-1,0]$ by
$$g(x,y)=-\sqrt{F}(T_1(\Phi(x)), T_2(\Phi(y))),$$

\begin{claim}
$g$ is convex over $K$.
\end{claim}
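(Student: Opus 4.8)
The plan is to exhibit $g$ as the negation of a concave function, by writing $-g$ as the composition of the jointly concave map $\sqrt{F}$ with an affine map, and then invoking the standard fact that a concave function precomposed with an affine map is again concave. The point is that all the ``moving parts'' between the input $(x,y)$ and the fidelity are linear or affine, so the only genuine input is the joint concavity of $\sqrt{F}$ recalled in the preliminaries.

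First I would record that the reconstruction map $\Phi$ is affine: from $\Phi(x)=\frac1N(I+\sum_i x_i P_{i+1})$ one sees immediately that $\Phi(\lambda x+(1-\lambda)x')=\lambda\Phi(x)+(1-\lambda)\Phi(x')$. Since $T_1$ and $T_2$ are linear, the maps $x\mapsto T_1(\Phi(x))$ and $y\mapsto T_2(\Phi(y))$ are affine, and hence so is $(x,y)\mapsto(T_1(\Phi(x)),T_2(\Phi(y)))$ as a map from $\R^{2N^2-2}$ into pairs of Hermitian operators on $\cA$. Moreover, for $(x,y)\in K$ both $\Phi(x)$ and $\Phi(y)$ are density matrices, so by complete positivity of $T_1,T_2$ the operators $T_1(\Phi(x))$ and $T_2(\Phi(y))$ are positive semi-definite operators on the common space $\cA$, i.e.\ exactly the objects on which $\sqrt{F}$ is defined. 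Fixing $(x_1,y_1),(x_2,y_2)\in K$ and $\lambda\in[0,1]$ (their convex combination again lies in the convex set $K$), the affinity just established together with joint concavity of $\sqrt{F}$ yields
\begin{align*}
-g\big(\lambda(x_1,y_1)+(1-\lambda)(x_2,y_2)\big)
&= \sqrt{F}\big(T_1(\Phi(\lambda x_1+(1-\lambda)x_2)),\, T_2(\Phi(\lambda y_1+(1-\lambda)y_2))\big) \\
&= \sqrt{F}\big(\lambda T_1(\Phi(x_1))+(1-\lambda)T_1(\Phi(x_2)),\, \lambda T_2(\Phi(y_1))+(1-\lambda)T_2(\Phi(y_2))\big) \\
&\ge \lambda\sqrt{F}(T_1(\Phi(x_1)),T_2(\Phi(y_1))) + (1-\lambda)\sqrt{F}(T_1(\Phi(x_2)),T_2(\Phi(y_2))) \\
&= -\lambda g(x_1,y_1)-(1-\lambda)g(x_2,y_2),
\end{align*}
which is precisely the convexity inequality $g(\lambda(x_1,y_1)+(1-\lambda)(x_2,y_2))\le \lambda g(x_1,y_1)+(1-\lambda)g(x_2,y_2)$.

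The one point that needs care --- the main, and essentially the only, obstacle --- is that the joint concavity recalled in the preliminaries is stated for \emph{density matrices}, whereas the operators $T_1(\Phi(x))$ and $T_2(\Phi(y))$ need not have trace $1$: the Stinespring operators $B,C$ of a general super-operator need not be isometries, so $T_1,T_2$ need not be trace preserving. Thus the second inequality above really requires joint concavity of $\sqrt{F}$ on all of (positive semi-definite) $\times$ (positive semi-definite), not just on density matrices. This extension is standard and I would dispose of it in one of two ways. One is to use the variational identity $\sqrt{F}(\rho,\sigma)=\inf_{X>0}\tfrac12\big(\Tr(X\rho)+\Tr(X^{-1}\sigma)\big)$ (equivalently Alberti's theorem $F(\rho,\sigma)=\inf_{X>0}\Tr(X\rho)\Tr(X^{-1}\sigma)$ followed by an AM--GM rescaling of $X$), which exhibits $\sqrt{F}$ as an infimum of functions that are jointly linear in $(\rho,\sigma)$ and therefore jointly concave with no trace restriction. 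The other is to reduce to the density-matrix statement using the homogeneity $\sqrt{F}(a\rho,b\sigma)=\sqrt{ab}\,\sqrt{F}(\rho,\sigma)$. Either route removes the trace-$1$ assumption and makes the composition argument above go through verbatim.
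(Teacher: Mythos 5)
Your proof is correct and takes essentially the same route as the paper's: you precompose the jointly concave $\sqrt{F}$ with the affine map $(x,y)\mapsto(T_1(\Phi(x)),T_2(\Phi(y)))$, which is exactly the paper's computation (written there with $k$-term convex combinations rather than binary ones). The trace-one subtlety you flag is genuine and is in fact glossed over by the paper --- its preliminaries state joint concavity only for density matrices, yet the proof applies it to $T_1(\rho_j),T_2(\xi_j)$, which have unit trace only when $B,C$ are isometries --- and your variational-formula argument closes this cleanly; just note that bare homogeneity does not suffice for your second route, since normalizing the two arguments produces two \emph{different} weight vectors, so that route actually needs the strong (unequal-weights) concavity of $\sqrt{F}$ or a reduction by padding with orthogonal blocks.
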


\begin{proof}
For every $0 \le \lambda_1,\ldots,\lambda_k \le 1$ such that
$\sum_{j=1}^k \lambda_j=1$,
\begin{eqnarray*}
g(\sum_{j=1}^k \lambda_j (\vj,\wj)) & = & g(\sum_{j=1}^k \lambda_j
\vj,\sum_{j=1}^k \lambda_j \wj) =
-\sqrt{F}(T_1(\Phi(\sum_{j=1}^k \lambda_j \vj)), T_2(\Phi(\sum_{j=1}^k \lambda_j \wj))) \\
& = & -\sqrt{F}(T_1(\sum_{j=1}^k \lambda_j \rho_j),
T_2(\sum_{j=1}^k \lambda_j \xi_j)),
\end{eqnarray*}
where $\rho_j=\Phi(\vj) \in D(\cV)$, $\xi_j=\Phi(\wj) \in D(\cV)$,
and we used the fact that $\Phi$ is linear for convex sums, i.e.,
$\Phi(\sum \lambda_j v_j)=\sum \lambda_j \Phi(v_j)$. Now, by the
joint concavity of $\sqrt{F}$,
\begin{eqnarray*}
g(\sum_{j=1}^k \lambda_j (\vj,\wj))
& = & -\sqrt{F}(\sum_{j=1}^k \lambda_j T_1(\rho_j), \sum_{j=1}^k \lambda_j T_2(\xi_j)) \\
& \le & -\sum_{j=1}^k \lambda_j \sqrt{F}(T_1(\rho_j), T_2(\xi_j))
= \sum_{j=1}^k \lambda_j g(\vj,\wj).
\end{eqnarray*}
\end{proof}

\begin{claim}
There exists an efficient evaluation oracle for $g$ over $K$.
\end{claim}

\begin{proof}
We are given as input $(x_1,x_2) \in K$ and $\eps>0$. We compute
$M_1=T_1(\Phi(x_1))$ and $M_2=T_2(\Phi(x_2))$ and this is done with no error. We would like to compute $g(x_1,x_2)=\trnorm{\sqrt{M_1} \sqrt{M_2}}$. We approximate $\sqrt{M_i}$ with $\zeta/2$ accuracy in the operator norm (it will turn out that $\zeta=\frac{\eps}{2 N \cdot (\norm{B}+\norm{C}+1)}$ suffices), and then we change each negative eigenvalue (if there are any) to zero. We get positive semi-definite $S_i$ such that $\norm{S_i-\sqrt{M_i}} \le \zeta$. We output an approximation of $\trnorm{S_1 S_2}$ with $\eps/2$ accuracy.

By Claims~\ref{cl:perturb-g} and~\ref{cl:perturb-norm} below:
\begin{eqnarray*}
\left|\trnorm{S_1 S_2} - \trnorm{\sqrt{M_1} \sqrt{M_2}}\right|
&\le& N \zeta \left(\norm{S_1}+\norm{\sqrt{M_2}}\right) \le
N \zeta (\norm{B}+\norm{C}+\zeta) \le \eps/2
\end{eqnarray*}
Thus, our output is $\eps$-close to $g(x_1,x_2)$ as required.
Also, observe that $\log (\zeta^{-1})$ is polynomial in the input
length, since $\log (\norm{B})$ and $\log (\norm{C})$ are
polynomial in the input length. Therefore, the evaluation oracle
is efficient.
\end{proof}

\begin{claim}\label{cl:perturb-g}
If $\rho_1,\rho_2,\sigma_1,\sigma_2 \in L(\cV)$ are positive
semi-definite and $\norm{\rho_i - \sigma_i} \le \zeta$ for
$i\in\set{1,2}$ then
$$\big| \trnorm{\rho_1 \rho_2} - \trnorm{\sigma_1 \sigma_2} \big|
\le N \zeta (\norm{\rho_1}+\norm{\sigma_2}).$$
\end{claim}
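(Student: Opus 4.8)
The plan is to reduce the difference of trace norms to the trace norm of a single matrix difference, split that difference by a telescoping identity, and then bound each piece using a mixture of the operator norm and the trace norm. Since $\trnorm{\cdot}$ is a genuine norm on $L(\cV)$, the reverse triangle inequality gives immediately
$$
\big| \trnorm{\rho_1 \rho_2} - \trnorm{\sigma_1 \sigma_2} \big|
\;\le\; \trnorm{\rho_1 \rho_2 - \sigma_1 \sigma_2},
$$
so it suffices to bound the right-hand side.

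Next I would introduce the telescoping identity
$$
\rho_1 \rho_2 - \sigma_1 \sigma_2
\;=\; \rho_1 (\rho_2 - \sigma_2) + (\rho_1 - \sigma_1)\sigma_2,
$$
and apply subadditivity of the trace norm to get $\trnorm{\rho_1(\rho_2-\sigma_2)} + \trnorm{(\rho_1-\sigma_1)\sigma_2}$. The key tool is then the H\"older-type submultiplicativity $\trnorm{AB}\le\norm{A}\trnorm{B}$ and $\trnorm{AB}\le\trnorm{A}\norm{B}$, where $\norm{\cdot}$ is the spectral norm. I would place the operator norm on $\rho_1$ in the first term and on $\sigma_2$ in the second, obtaining the upper bound $\norm{\rho_1}\trnorm{\rho_2-\sigma_2} + \trnorm{\rho_1-\sigma_1}\norm{\sigma_2}$. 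This asymmetric placement is what makes the final answer feature exactly $\norm{\rho_1}+\norm{\sigma_2}$ rather than, say, $\norm{\rho_1}+\norm{\rho_2}$.

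Finally I would convert the two remaining trace norms of differences into operator norms via the dimension bound $\trnorm{A}\le N\norm{A}$, which holds because $\trnorm{A}$ is the sum of at most $N$ singular values, each bounded by $\norm{A}$. Using the hypothesis $\norm{\rho_i-\sigma_i}\le\zeta$, both $\trnorm{\rho_2-\sigma_2}$ and $\trnorm{\rho_1-\sigma_1}$ are at most $N\zeta$, and assembling the pieces yields $N\zeta(\norm{\rho_1}+\norm{\sigma_2})$, as claimed.

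There is no serious obstacle here; the only point requiring care is choosing which factor carries the operator norm in each of the two H\"older steps so that the asymmetric target $\norm{\rho_1}+\norm{\sigma_2}$ comes out correctly. It is worth noting that the positive semi-definiteness hypothesis is not actually used in this argument — the inequality holds for arbitrary operators — so I would not rely on it in the proof.
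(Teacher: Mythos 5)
Your proof is correct and essentially identical to the paper's: both reduce to the terms $\trnorm{\rho_1(\rho_2-\sigma_2)}$ and $\trnorm{(\rho_1-\sigma_1)\sigma_2}$ (the paper inserts the intermediate quantity $\trnorm{\rho_1\sigma_2}$ and applies the reverse triangle inequality to each piece, which is the same telescoping you do at the operator level), then applies the same H\"older-type bounds and the dimension bound $\trnorm{A}\le N\norm{A}$. Your observation that positive semi-definiteness is never used also matches the paper's argument, which likewise does not invoke it.
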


\begin{proof}
\begin{eqnarray*}
\big| \trnorm{\rho_1 \rho_2} - \trnorm{\sigma_1 \sigma_2}\big| &
\le &
\big| \trnorm{\rho_1 \rho_2} - \trnorm{\rho_1 \sigma_2}\big| +\big| \trnorm{\rho_1 \sigma_2} - \trnorm{\sigma_1 \sigma_2}\big| \\
&\le & \trnorm{\rho_1(\rho_2-\sigma_2)}+\trnorm{(\rho_1-\sigma_1)\sigma_2} \\
&\le& \norm{\rho_1}\trnorm{\rho_2-\sigma_2}+\norm{\sigma_2}\trnorm{\rho_1-\sigma_1} \\
& \le & N \zeta (\norm{\rho_1}+\norm{\sigma_2}).
\end{eqnarray*}
\end{proof}

\begin{claim}\label{cl:perturb-norm}
For any $\rho \in D(\cV)$: $\norm{\sqrt{T_1(\rho)}} \le \norm{B}$,
$\norm{\sqrt{T_2(\rho)}} \le \norm{C}$.
\end{claim}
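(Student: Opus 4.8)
The plan is to prove the bound for $T_1$; the argument for $T_2$ is word-for-word identical with $B$ replaced by $C$. Since $T_1(\rho)=\Tr_\cV(B\rho B^\dagger)$ is positive semi-definite---it is the partial trace of the positive semi-definite operator $B\rho B^\dagger$---its square root is well defined, and for any positive semi-definite $A$ the square root merely takes square roots of the eigenvalues, so $\norm{\sqrt{A}}=\sqrt{\norm{A}}$. Hence it suffices to establish $\norm{T_1(\rho)} \le \norm{B}^2$.

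First I would rewrite the spectral norm of the positive operator $T_1(\rho)$ variationally as $\norm{T_1(\rho)}=\sup_{\ket{\psi}}\bra{\psi}T_1(\rho)\ket{\psi}$, the supremum ranging over unit vectors $\ket{\psi}\in\cA$. Expanding the partial trace over $\cV$ in the standard basis $\set{\ket{e_i}}$ of $\cV$ gives $\bra{\psi}T_1(\rho)\ket{\psi}=\sum_i (\bra{e_i}\tensor\bra{\psi})\, B\rho B^\dagger\, (\ket{e_i}\tensor\ket{\psi})$. I would then diagonalize $\rho=\sum_k p_k \ketbra{\phi_k}{\phi_k}$ with $p_k\ge 0$, $\sum_k p_k=1$ and orthonormal $\ket{\phi_k}$, set $\ket{u_k}=B\ket{\phi_k}\in\cV\tensor\cA$, and substitute to obtain $\bra{\psi}T_1(\rho)\ket{\psi}=\sum_k p_k \sum_i \big|\braket{e_i\tensor\psi}{u_k}\big|^2$.

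The crucial observation is that $\set{\ket{e_i}\tensor\ket{\psi}}_i$ is an orthonormal set in $\cV\tensor\cA$ (the $\ket{e_i}$ are orthonormal and $\ket{\psi}$ is a unit vector), so Bessel's inequality yields $\sum_i \big|\braket{e_i\tensor\psi}{u_k}\big|^2\le\norm{u_k}^2=\norm{B\ket{\phi_k}}^2\le\norm{B}^2$, the last step using $\norm{\ket{\phi_k}}=1$. Summing against the $p_k$ and invoking $\sum_k p_k=1$ gives $\bra{\psi}T_1(\rho)\ket{\psi}\le\norm{B}^2$ for every unit $\ket{\psi}$, which is exactly $\norm{T_1(\rho)}\le\norm{B}^2$.

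The step I expect to be the real content is the use of Bessel's inequality. The naive route---bounding $\norm{B\rho B^\dagger}\le\norm{B}^2\norm{\rho}\le\norm{B}^2$ and then passing through the partial trace---fails, because a partial trace over an $N$-dimensional space can inflate the spectral norm by a factor of up to $N$. What rescues the argument is that I never estimate $\sum_i$ as $N$ terms each of size $\norm{B}^2$; instead I fold the entire sum over the $\cV$-index into the squared length of the single vector $\ket{u_k}=B\ket{\phi_k}$ living in $\cV\tensor\cA$, and it is the normalization $\Tr(\rho)=\sum_k p_k=1$ (rather than the weaker $\norm{\rho}\le 1$) that keeps the final bound at $\norm{B}^2$ instead of $N\norm{B}^2$.
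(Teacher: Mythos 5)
Your proof is correct and follows essentially the same route as the paper's: both diagonalize $\rho$, reduce to bounding the partial trace of each rank-one piece $B\ketbra{\phi_k}{\phi_k}B^\dagger$ by $\norm{B\ket{\phi_k}}_2^2 \le \norm{B}^2$, and combine using $\Tr(\rho)=1$. The only difference is that the paper asserts the key step $\norm{\Tr_\cV(\ketbra{w}{w})} \le \norm{\ket{w}}_2^2$ without proof, whereas you derive it explicitly via Bessel's inequality applied to the orthonormal family $\set{\ket{e_i}\tensor\ket{\psi}}_i$.
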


\begin{proof}
$T_1$ is completely positive and so $T_1(\rho)$ is positive
semi-definite and
$\norm{\sqrt{T_1(\rho)}}=\sqrt{\norm{T_1(\rho)}}$. Express
$\rho=\sum_i \lambda_i \ketbra{v_i}{v_i}$ with $\set{\ket{v_i}}$
being an orthonormal basis, $\lambda_i>0$ and $\sum_i
\lambda_i=1$. Denote $\ket{w_i}=B \ket{v_i}$. Then,
$$\norm{T_1(\rho)} = \norm{\sum_i \lambda_i \Tr_\cV (B
\ketbra{v_i}{v_i} B^\dagger)} \le \sum_i \lambda_i \norm{\Tr_\cV
(\ketbra{w_i}{w_i})} \le \sum_i \lambda_i \norm{\tiny
\ket{w_i}\tiny}_2^2,$$ where we have used
$\norm{\Tr_{\cV}(\ketbra{w}{w})} \le \norm{\tiny
\ket{w_i}\tiny}_2^2$.
%
%
Thus, $\norm{T_1(\rho)} \le \norm{B}^2 \sum_i \lambda_i =
\norm{B}^2$. A similar argument applies for $T_2$.
\end{proof}


\subsection{The algorithm}

To compute the diamond norm of a given super-operator, the
algorithm essentially solves the convex program that finds the
minimum value of $g$ over the convex set. The last thing that we
need is to show that $g$ is indeed defined and can be evaluated
over points that are at most $\eps$-far from this set. However the
set $K$ is not good enough for this purpose since matrices that
lie outside this set (but still close to it) have negative
eigenvalues and it is not clear how one should define the fidelity
for such matrices. To overcome this problem we define a new convex
set $S$ that is just a shrinking of $K$. This ensures that
matrices that are $\eps$-close to the boundary are still positive.

We set $M=-N \sqrt{\norm{T_1} \norm{T_2}}$, where $\norm{T_i}$ is the spectral norm of $T_i$ when viewed as a linear operator in $\Hom(L(\cV),L(\cA))$. It can be verified
that $\min_{x \in K} g(x) \ge -M$.
%
%
Given $\eps>0$,
we define $\alpha={\eps \over 4M}$ and $\eps'={\alpha \over \sqrt{N}}$.
We define
$$\SO=(1-\alpha)\KO.$$

\begin{claim}
$\SO=\set{x \in K ~:~ \lambda_N(\Phi(x)) \ge {\alpha \over N}}$.
Furthermore, $\SO$ is convex, has an efficient membership oracle and $\SO_{+\eps'} \subseteq \KO$.
\end{claim}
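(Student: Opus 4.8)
The whole claim rests on a single affine identity relating the scaling $x \mapsto (1-\alpha)x$ in Pauli-coordinate space to mixing with the maximally mixed state $I/N$, and the first thing I would do is record it. Writing $\sum_{i} v_i(\rho)P_{i+1} = N\rho - I$ (which is just $\Phi(v(\rho))=\rho$ rearranged) and substituting into the definition of $\Phi$ gives, for every $\rho \in D(\cV)$,
$$\Phi((1-\alpha)v(\rho)) = (1-\alpha)\rho + \frac{\alpha}{N}I.$$
Thus scaling a Pauli vector by $1-\alpha$ corresponds exactly to replacing $\rho$ by its $\alpha$-mixture with $I/N$.

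With this identity the set equality is immediate in both directions. If $x \in \SO$, write $x=(1-\alpha)v(\rho)$; the right-hand side above is a convex combination of two density matrices, so $\Phi(x)\in D(\cV)$ (hence $x\in\KO$), and its smallest eigenvalue is $(1-\alpha)\lambda_N(\rho)+\alpha/N \ge \alpha/N$. Conversely, if $x\in\KO$ with $\lambda_N(\Phi(x))\ge\alpha/N$, set $\tau=\Phi(x)$ and verify that $\Phi(x/(1-\alpha)) = \frac{1}{1-\alpha}(\tau - \frac{\alpha}{N}I)$ has trace $1$ and smallest eigenvalue $\ge 0$; hence it is a density matrix, $x/(1-\alpha)\in\KO$, and $x\in\SO$. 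Convexity of $\SO$ needs no work, since $\SO=(1-\alpha)\KO$ is a positive scaling of the convex set $\KO$.

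For the membership oracle I would reduce to the oracle for $\KO$ from Claim~\ref{cl:membership}: since $x\in\SO$ iff $x/(1-\alpha)\in\KO$, on input $(x,\eps)$ I query that oracle at $(x/(1-\alpha),\eps)$ and return its answer. The only point to check is the tolerances. If $x\in\SO_{-\eps}$ then $x/(1-\alpha)\in\KO_{-\eps/(1-\alpha)}\subseteq\KO_{-\eps}$, so the oracle returns $1$; if $x\notin\SO_{+\eps}$ then $x/(1-\alpha)\notin\KO_{+\eps/(1-\alpha)}$, and as $\KO_{+\eps}\subseteq\KO_{+\eps/(1-\alpha)}$ it is not in $\KO_{+\eps}$ either, so the oracle returns $0$. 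Scaling by the known constant $1-\alpha$ is efficient, so the oracle is efficient. For the last inclusion $\SO_{+\eps'}\subseteq\KO$, take $x$ with $\norm{x-y}_2\le\eps'$ for some $y\in\SO$. The characterization gives $\lambda_N(\Phi(y))\ge\alpha/N$, and the norm relation gives $\norm{\Phi(x)-\Phi(y)}\le\norm{\Phi(x)-\Phi(y)}_2=\frac{1}{\sqrt{N}}\norm{x-y}_2\le\eps'/\sqrt{N}=\alpha/N$. Weyl's eigenvalue-perturbation bound then yields $\lambda_N(\Phi(x))\ge\lambda_N(\Phi(y))-\alpha/N\ge0$, and since $\Phi(x)$ always has trace $1$ it is a density matrix, so $x=v(\Phi(x))\in\KO$.

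The heart of the argument is the affine identity; once it is in hand, the set equality, convexity, and the oracle are routine bookkeeping. The only genuinely delicate point is matching the two constants in the final inclusion: $\eps'$ is defined as $\alpha/\sqrt{N}$ precisely so that an $\ell_2$-perturbation of size $\eps'$ in coordinate space becomes, through the $\frac{1}{\sqrt{N}}$ factor of $\Phi$, a spectral perturbation of size exactly $\alpha/N$, which is the eigenvalue slack that membership in $\SO$ guarantees. Getting these to cancel is what makes the robustness bound tight, and I would double-check that the coarser estimate $\norm{\cdot}\le\norm{\cdot}_2$ costs nothing here (it does not).
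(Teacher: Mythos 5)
Your proposal is correct and follows essentially the same route as the paper: the same affine identity $\Phi((1-\alpha)v(\rho))=(1-\alpha)\rho+\frac{\alpha}{N}I$ underlies the paper's chain of equivalences, convexity is inherited from $\KO$ by scaling, the membership oracle is reduced to the one for $\KO$ from Claim~\ref{cl:membership}, and the inclusion $\SO_{+\eps'}\subseteq\KO$ uses exactly the same eigenvalue-perturbation bound combined with $\norm{\Phi(x)-\Phi(y)}\le\frac{1}{\sqrt{N}}\norm{x-y}_2$. Your write-up is in fact somewhat more careful than the paper's (explicit two-sided set equality, explicit oracle tolerances, and you correct the paper's apparent swap of $x$ and $z$ in its final display), but these are elaborations of the same argument, not a different one.
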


\begin{proof}
\begin{eqnarray*}
z \in \SO & \Leftrightarrow & z=(1-\alpha)x \mbox{ for some $x \in \KO$}\\
 & \Leftrightarrow & \Phi(z)=(1-\alpha)\Phi(x)+\alpha \frac{I}{N} \mbox{ ~for some $\Phi(x) \in D(\cV)$} \\
 & \Leftrightarrow & \lambda_N(\Phi(z)) \ge {\alpha \over N}.
 \end{eqnarray*}

$\SO$ is convex and has an efficient membership oracle because
$\KO$ does. Also, $\SO_{+\eps'} \subseteq \KO$ because if $z \in
\SO$ and $\norm{x-z}_2\le \eps'$ then $$\lambda_N(\phi(z)) \ge
\lambda_N(\phi(x))-\norm{\Phi(x)-\Phi(z)} = {\alpha \over
N}-{\norm{x-z} \over \sqrt{N}} \ge {\alpha \over N}-{\eps' \over
\sqrt{N}} = 0.$$
\end{proof}

%
%

We are now ready to prove:

\begin{theorem}
Let $\cV$ be a Hilbert space of dimension $N$. Let $T:L(\cV) \to
L(\cV)$ be a linear operator given in a Stinespring
representation, i.e., as a pair of operators $(B,C)$ such that
$$T(X) = \Tr_\cA (B X C^\dagger),$$ and let $\eps>0$. Then there
exists a polynomial time algorithm (in the input length of $T$ and
$\log \eps^{-1}$) that outputs a value $c$ such that $
|~c-\dnorm{T}| \le \eps$.
\end{theorem}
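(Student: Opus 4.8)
The plan is to read off $\dnorm{T}$ from the convex program assembled above. By Kitaev's characterization together with the definition of $g$, we have
$\dnorm{T}=\max_{(x,y)\in K}\sqrt{F}(T_1(\Phi(x)),T_2(\Phi(y)))=-\min_{(x,y)\in K} g(x,y)$,
so it suffices to minimize the convex function $g$ over the convex body $K$ to within additive error $\eps$ and negate the result. Every ingredient required by Theorem~\ref{thm:convex-programming} is already in place: $K$ is convex with an efficient membership oracle (Claim~\ref{cl:membership}), it is sandwiched between balls of radii $\tfrac{1}{2\sqrt N}$ and $2N$ about the origin in $\R^{2N^2-2}$, and $g$ is convex with an efficient evaluation oracle.

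The single obstacle is that Theorem~\ref{thm:convex-programming} asks for $g$ to be defined and convex not merely on $K$ but on a slightly fattened set $K_{+\delta}$, whereas $g$ has no meaning off $K$: a point just outside $\KO$ corresponds to a matrix $\Phi(x)$ with a negative eigenvalue, for which the fidelity is undefined. To sidestep this I will run the algorithm on the shrunken body $S=\SO\times\SO=(1-\alpha)K$ rather than on $K$. By the final claim of the previous subsection $\SO_{+\eps'}\subseteq\KO$, hence $S_{+\eps'}\subseteq K$, so $g$ is defined and (being convex throughout $K$) convex on $S_{+\eps'}$. The body $S$ inherits convexity and an efficient membership oracle from $\KO$, lies inside $B_{2N^2-2}(\ol{0},2N)$, and contains $B_{2N^2-2}(\ol{0},\tfrac{1-\alpha}{2\sqrt N})$, so the hypotheses hold with $K\to S$. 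I will invoke it with accuracy $\delta=\min(\eps',\eps/4)$, obtaining a point whose $g$-value is within $\delta$ of $\min_{(x,y)\in S_{-\delta}} g$.

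The remaining work is to show $\min_{S_{-\delta}} g$ is within $O(\eps)$ of $\min_K g=-\dnorm{T}$. One direction is free: $S_{-\delta}\subseteq K$ gives $\min_{S_{-\delta}} g\ge-\dnorm{T}$. For the other direction I take an optimizer $(x^*,y^*)\in K$ and consider its contraction $(1-3\alpha)(x^*,y^*)$. Using $\KO\supseteq B_{N^2-1}(\ol{0},\tfrac{1}{2\sqrt N})$ and convexity, a short computation shows this point is $\eps'$-deep in $S$, hence lies in $S_{-\delta}$. Since $\Phi$ is affine, $\Phi((1-3\alpha)x^*)=(1-3\alpha)\Phi(x^*)+3\alpha\tfrac IN$, so by linearity $T_1$ sends it to a convex combination of $T_1(\Phi(x^*))$ and $T_1(\tfrac IN)$, and likewise $T_2$. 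The joint concavity of $\sqrt{F}$ then yields $-g\big((1-3\alpha)(x^*,y^*)\big)\ge(1-3\alpha)\,\dnorm{T}$, whence $\min_{S_{-\delta}} g\le-\dnorm{T}+3\alpha\,\dnorm{T}$. Using $\dnorm{T}\le M$ (the verified bound $\min_K g\ge-M$) and $\alpha=\tfrac{\eps}{4M}$, the excess is at most $3\alpha M=\tfrac{3\eps}{4}$.

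Putting these together, the output value $c$ (the negation of the reported $g$-value) satisfies $|c-\dnorm{T}|\le\delta+\tfrac{3\eps}{4}\le\tfrac{\eps}{4}+\tfrac{3\eps}{4}=\eps$. The main conceptual hurdle is the definedness issue, resolved by shrinking to $S=(1-\alpha)K$; the main technical point is that both the shrinking and the further retreat to the depth-$\delta$ interior $S_{-\delta}$ cost only $O(\alpha M)$ in the objective, which the joint-concavity estimate controls by noting that blending in the maximally mixed state $\tfrac IN$ scales $\sqrt{F}$ down by at most a factor $1-O(\alpha)$. (The degenerate case $M<\eps/4$, where $\alpha$ would exceed $1$, is handled trivially by outputting $0$, since then $\dnorm{T}\le M<\eps$.) Finally the running time is $\poly(N,\logeps)$: the dimension is $2N^2-2$, and $\log(R/r)$ and $\log\delta^{-1}$ are polynomial in the input length and $\logeps$ (using that $\log\norm{B}$ and $\log\norm{C}$ are), while all oracles are efficient.
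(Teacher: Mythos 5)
Your proposal is correct and follows essentially the same route as the paper's own proof: shrink $K$ to $S=\SO\times\SO=(1-\alpha)K$ so that the fidelity-based objective $g$ is defined and convex on the fattened set $S_{+\eps'}$, run the convex-programming algorithm of Theorem~\ref{thm:convex-programming} there, and bound the resulting loss by contracting the true optimizer into the deep interior of $S$ and applying joint concavity of $\sqrt{F}$ after mixing with $I/N$ (the paper contracts by $1-2\alpha$ where you use $1-3\alpha$; this is immaterial). The one nitpick — shared by the paper, which never treats small $M$ at all — is that your contraction argument needs $3\alpha\le 1$, so the degenerate threshold should be $M<\tfrac{3\eps}{4}$ rather than $M<\tfrac{\eps}{4}$; this is harmless since outputting $0$ remains valid whenever $M<\eps$, as $\dnorm{T}\le M$.
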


\begin{remark}
The fact that the input operator $T$ is given in a Stinespring
representation is without loss of generality as there exists
efficient algorithms to move from such a representation to other
standard forms of representing a super-operator (see,
e.g.,~\cite[Lecture 5]{W04}).
\end{remark}

\begin{proof}
We approximate $\norm{T_i}$ from above in time polynomial in the representation of $T_i$,
and set $M, \alpha$, and $\eps'$ as above. We define $S=\SO \times \SO$
and $g:K \to \R$ as above. The target function $g$ has an
efficient membership oracle and is convex over $K$ and therefore
over $S_{+\eps'}$. By Theorem \ref{thm:convex-programming} we can
find a value $\widetilde{\opt}$ that approximates  $\min_{x \in
S_{-\eps'}} g(x)$ to within $\eps'$.

Now, let $o=(o_1,o_2) \in K$ be a point minimizing $g$ over $K$,
that is, $g(o) = \min_{x \in K} g(x)$. We claim that
$o'=(1-2\alpha)o$ lies in $S_{-\eps'}$. Indeed, fix any $y_i \in
B_{N^2-1}(o_i',\eps')$. Then, $$\lambda_N(\Phi(y_i)) \ge
\lambda_N(\Phi(o'_i))-{\eps' \over \sqrt{N}} \ge {2\alpha \over N}
-{\eps \over \sqrt{N}} = {\alpha \over N},$$ and therefore $y \in
S$. Thus,
$$g(o) \le \widetilde{opt} \le g(o')+\eps'.$$

However,
\begin{eqnarray*}
g(o') & = & g((1-2\alpha)o) =
-\sqrt{F}\left(T_1\left((1-2\alpha)\Phi(o_1)+2\alpha{I \over
N}\right),
T_2 \left((1-2\alpha)\Phi(o_2)+2\alpha {I \over N}\right)\right) \\
& \le &
(1-2\alpha)\left(-\sqrt{F}\big(T_1(\Phi(o_1)),T_2(\Phi(o_2))\big)\right)+
2\alpha \left(-\sqrt{F}\left(T_1\left({I \over N}\right),T_2\left({I \over N}\right)\right)\right) \\
& \le & (1-2\alpha)g(o_1,o_2)-2{\alpha \over N}
\sqrt{F}\left(\Tr_{\cV} BB^{\dagger},\Tr_{\cV} CC^{\dagger}\right)
\le (1-2\alpha)g(o).
\end{eqnarray*}

Altogether, $|\widetilde{opt}-g(o)| \le \eps'-2\alpha ~g(o) \le \eps'+2\alpha M \le \eps$.
\end{proof}

\bibliographystyle{alpha}
\bibliography{refs}
\end{document}